\newtheorem{theorem}{Theorem}
\newtheorem{lemma}[theorem]{Lemma}
\begin{document}

\title{Implementation of Continuous-Time Quantum Walk on Sparse Graph}
\author{Zhaoyang Chen\thanks{chenchy76@mail2.sysu.edu.cn}}
\author{Guanzhong Li\thanks{ligzh9@mail2.sysu.edu.cn}}
\author{Lvzhou Li\thanks{lilvzh@mail.sysu.edu.cn(corresponding author)}}
\affil{Institute of Quantum Computing and Software, \\School of Computer Science and Engineering, \\Sun Yat-sen University, Guangzhou 510006, China}
\date{\today}
\maketitle

\begin{abstract} Continuous-time quantum walks (CTQWs) play a crucial role in quantum computing, especially for designing  quantum algorithms. However, how to efficiently implement CTQWs is a challenging issue. In this paper, 
we study  implementation of CTQWs on sparse graphs, i.e.,  constructing efficient quantum circuits for implementing the unitary operator $e^{-iHt}$, where $H=\gamma A$ ($\gamma$ is a constant and $A$ corresponds to the adjacency matrix of a graph). Our result is, for a $d$-sparse graph with $N$ vertices and evolution time $t$, we can approximate $e^{-iHt}$ by a quantum circuit with  gate complexity $(d^3 \|H\| t N \log N)^{1+o(1)}$, compared to the general Pauli decomposition, which scales like $(\|H\| t N^4 \log N)^{1+o(1)}$. For sparse graphs, for instance, $d=O(1)$, we obtain a noticeable improvement. Interestingly, our  technique is related to graph decomposition. More specifically, we decompose the graph into a union of star graphs, and correspondingly, the Hamiltonian $H$ can be represented as the sum of some Hamiltonians $H_j$, where each $e^{-iH_jt}$ is a CTQW on a star graph which can be implemented efficiently.
\end{abstract}


\section{Introduction}

Due to the potential applications in a variety of problems, e.g., spatial searching\cite{childs2004spatial}, testing graph isomorphism\cite{douglas2008classical} and simulating quantum dynamics\cite{mohseni2008environment}, quantum walks have recently attracted significant attention both theoretically and experimentally\cite{farhi1998quantum,childs2009universal,childs2013universal,wang2013physical,li2024optimal,li2024recovering}. To execute quantum walk-based algorithms on quantum computers, the development of efficient quantum circuits for implementing quantum walks is essential.

There are two distinct kinds of quantum walks: the discrete-time quantum walks (DTQWs)\cite{aharonov1993quantum,hillery2003quantum,szegedy2004quantum} and the continuous-time quantum walks (CTQWs)\cite{farhi1998quantum}. While DTQWs have attained notable advancements in circuit design\cite{douglas2009efficient,loke2012efficient,loke2017efficient}, the implementation of CTQWs seems to be more challenging\cite{loke2017efficient1}. On the one hand, CTQWs demand a sophisticated time-dependent quantum circuit implementation, as the evolution operator $U(t)$ is intricately defined as a continuous function of $t$. In contrast, DTQWs operate with discrete time steps where the evolution operator $U(t)$ consists of the repetition of a single time-step $U$, which remains independent of $t$. On the other hand, the single time-step operator $U$ in DTQWs acts solely on the local structure of the graph, propagating amplitudes only to adjacent vertices, whereas the dynamics of CTQWs are characterized by the propagation of amplitudes between any two connected vertices on the graph, necessitating a consideration of the global graph structure in quantum circuit design.

Efficient quantum circuit implementation for CTQWs has been demonstrated across various graph classes, such as commuting graphs and Cartesian products of graphs\cite{loke2017efficient1}. Notably, on certain graph classes, including  complete bipartite graphs, complete graphs\cite{loke2017efficient1}, glued trees\cite{childs2003exponential} and star graphs\cite{qu2022deterministic}, CTQWs can be efficiently implemented using diagonalization techniques. Additionally, a specific set of circulant graphs can be efficiently implemented, leveraging the quantum Fourier transform\cite{qiang2016efficient}. For circulant graphs with $2^n$ vertices possessing $O(\text{poly}(n))$ non-zero eigenvalues or efficiently characterizable distinct eigenvalues (e.g., cycle graphs, Möbius ladder graphs), an efficient quantum circuit for CTQWs can be constructed\cite{qiang2016efficient}.

The problem of implementing CTQWs can also be regarded as Hamiltonian simulation, which is BQP-complete\cite{feynman1985quantum}, suggesting that classical algorithms solving this problem efficiently are improbable. Lloyd\cite{lloyd1996universal} pioneered the first explicit quantum algorithm for Hamiltonian simulation, focusing on local interactions, which was later extended to more general sparse Hamiltonians by Aharonov and TaShma\cite{aharonov2003adiabatic}. Subsequent work has obtained celebrated achievements over the years\cite{berry2007efficient,berry2009black,berry2014exponential,berry2015simulating,childs2011simulating,childs2012hamiltonian,low2019hamiltonian,low2017optimal}, but mostly in the "black-box" setting except for those algorithms based on product formulas, which is also most commonly utilized. Product formulas rely on the assumption that the Hamiltonian $H$ can be broken down into a sum of individual terms ($H=\sum_{j=1}^m H_j$), such that each exponential $e^{-iH_jt}$ can be implemented efficiently. This allows the approximation of the time-evolution operator $e^{-iHt }$ by the products of exponentials $e^{-iH_j t}$. The size of the quantum circuit depends on the number of terms $m$. Thus, the way we decompose Hamiltonian is directly related to the gate complexity. Typically, any $2^n\times 2^n$-dimensional $H$ can be decomposed as a linear combination of Pauli string ($H = \sum_{j=1}^m h_j P_j$, $P_j \in \{I,X,Y,Z\}^{\otimes n}$), where each $e^{-ih_j P_jt}$ can be implemented by $O(n)$ elementary gates, but leading to the number of terms $m=O(4^n)$. In this paper, we adopt a graph decomposition approach to decompose a $d$-sparse graph into $m=6d$ star forests, and achieve a fast-forwarded quantum circuit implementation of CTQWs on star graphs. Consequently, this method results in a more efficient circuit implementation for CTQWs on $d$-sparse graphs compared to the Pauli decomposition method.


\section{Problem Description and Main Result}\label{sec:problem}

Let us consider an undirected graph $G(V,E)$ comprising a vertex set $V=\{v_1,v_2,...,v_N\}$ $(\text{where }N=2^n)$ and an edge set $E=\{v_iv_j,...,v_kv_l\}$. The complete depiction of the graph $G$ can be encapsulated by the $N\times N$ adjacency matrix $A$, which is defined as:
\begin{equation}
    A_{ij}=\left\{\begin{matrix}
     1 & v_iv_j \in E,\\
     0 & \text{otherwise.}
    \end{matrix}\right.    
\end{equation}
In the case of an undirected graph, the corresponding adjacency matrix $A$ is symmetric ($A=A^T$), and the degree of a vertex $v_i$ is represented by the number of edges connected to it, denoted as:
\begin{equation}
    d_i = \sum_{j=1}^N A_{ij}.
\end{equation}
We can further define an $N\times N$ degree matrix $D$ by $D_{ij}=d_i\delta_{ij}$, where $\delta_{ij}$ equals $1$ if $i=j$, and $0$ otherwise. Assuming that the maximum degree of graph $G$ is $d$, i.e., for all $i=1,...,N$, $d_i \le d$, the adjacency matrix $A$ is consequently $d$-sparse\footnote{A $d$-sparse matrix has at most $d$ non-zero entries in each row or each column.}.

\paragraph{Continuous-time quantum walk model} The CTQW model originates from Markov chain, assuming that there is a quantum particle walking on graph $G$, and a quantum state $|\psi(t)\rangle$ describes the particle's position at time $t$, with $p_j(t) = |\langle \psi(t)|j\rangle|^2$ denoting the probability that this particle found at vertex $j$, where the computational basis $\{|1\rangle, |2\rangle, \dots, |N\rangle\}$ corresponds to the vertices in graph $G$. The evolution of the state $|\psi(t)\rangle$ is governed by the Schr$\mathrm{\ddot o}$dinger equation as follows,
\begin{equation}\label{eq:schr}
    i\frac{\mathrm{d} }{\mathrm{d} t} |\psi(t)\rangle = H|\psi(t)\rangle,
\end{equation}
where the choice of Hamiltonian $H$ depends on the purpose, for example, $H=\gamma(D-A)$\cite{childs2002example} and $H=\gamma A$\cite{chakrabarti2012design,farhi1998quantum} for graph isomorphism problems ($\gamma$ is the probability per unit time of moving to an adjacent vertex), and $H=|\omega \rangle \langle \omega| - A$\cite{qiang2021implementing} for searching the marked vertex $|\omega\rangle$. In this paper we only focus on the choice of $H=\gamma A$.

The solution to equation (\ref{eq:schr}) is $|\psi(t)\rangle = e^{-iHt}|\psi(0)\rangle$, and our goal is to construct a quantum circuit $\tilde{U}$, which implements the target operator $e^{-iHt}$ within error $\epsilon$:
\begin{equation} \label{eq:error}
    \|\tilde{U}-e^{-iHt}\| < \epsilon.
\end{equation}

Product formulas method, introduced by Lloyd\cite{lloyd1996universal} for the problem of Hamiltonian simulation, is commonly used to simulate the evolution $e^{-iHt}$ when Hamiltonian has the form $H=\sum_{j=1}^m H_j$, where each $H_j$ can be efficiently simulated for arbitrary evolution time $t'$. Then the evolution $e^{-iHt}$ can be approximated by a product of exponentials $e^{-iH_jt'}$. The upper bound of the number of exponentials required is given in reference\cite{berry2007efficient}, as shown in Lemma \ref{theorem:pf}. 

\begin{lemma}[\cite{berry2007efficient}, Theorem 1]\label{theorem:pf}
    In order to simulate the Hamiltonian of the form $H=\sum_{j=1}^m H_j$ for time $t$ by a product of exponentials $e^{-iH_jt'}$ within error $\epsilon$, the number of exponentials $N_{\exp}$ is bounded by
    \begin{equation}
        N_{\exp} \le 2m^25^{2k} \|H\|t(m\|H\|t/\epsilon)^{1/2k},
    \end{equation}
    for $\epsilon \le 1 \le 2m5^{k-1} \|H\| t$, where $k$ is an arbitrary positive integer.
\end{lemma}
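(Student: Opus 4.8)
The plan is to instantiate the product formula of Lemma~\ref{theorem:pf} with Suzuki's recursive higher-order integrator and then carry out the standard ``slice the evolution, bound the error on one slice, sum up'' argument. The building blocks are the symmetric second-order Trotter formula $S_2(\lambda)=e^{-iH_1\lambda/2}\cdots e^{-iH_m\lambda/2}\,e^{-iH_m\lambda/2}\cdots e^{-iH_1\lambda/2}$, which is a product of $2m$ exponentials of the $H_j$, together with Suzuki's order-$2k$ formula defined recursively by
\[
  S_{2k}(\lambda)=S_{2k-2}(p_k\lambda)^2\,S_{2k-2}\bigl((1-4p_k)\lambda\bigr)\,S_{2k-2}(p_k\lambda)^2,\qquad p_k=\bigl(4-4^{1/(2k-1)}\bigr)^{-1}.
\]
Unrolling the recursion expresses $S_{2k}(\lambda)$ as a product of $5^{k-1}$ copies of $S_2$ evaluated at signed arguments each of magnitude at most $|\lambda|$ (one checks $0<p_k<1$ and $|1-4p_k|<1$ for $k\ge2$), so $S_{2k}(\lambda)$ consists of at most $2m\cdot5^{k-1}$ exponentials.

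Next I would prove the local error estimate. Since $S_{2k}$ is by construction an order-$2k$ integrator, $S_{2k}(\lambda)=e^{-iH\lambda}+O(\lambda^{2k+1})$; making this quantitative, by matching the power series of $S_{2k}(\lambda)$ and $e^{-iH\lambda}$ term by term and bounding each surviving term by the corresponding power of $\max_j\|H_j\|$ (which is $\le\|H\|$ for the $1$-sparse-type decompositions relevant here) weighted by the combinatorial factor $5^{k-1}$, one arrives at a bound of the shape
\[
  \bigl\|S_{2k}(\lambda)-e^{-iH\lambda}\bigr\|\;\le\;\tfrac13\bigl(2\cdot5^{k-1}m\|H\|\,|\lambda|\bigr)^{2k+1}
\]
valid whenever $2\cdot5^{k-1}m\|H\|\,|\lambda|\le1$. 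The cleanest route is induction on $k$: the base case $k=1$ is the standard second-order Trotter estimate via the Baker--Campbell--Hausdorff expansion, and the inductive step feeds the five-factor recursion into the order-$(2k-2)$ bound. I expect this step, with its explicit bookkeeping of the $5^{k-1}$ weight and of the Taylor remainder, to be the main obstacle; the precise value of the prefactor does not matter, only the exponent $2k+1$ and the dependence $(5^{k-1}m\|H\|)^{2k+1}$.

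To pass to arbitrary $t$, I would write $e^{-iHt}=(e^{-iHt/r})^r$, approximate each factor by $S_{2k}(t/r)$, and use $\|A^r-B^r\|\le r\|A-B\|$ for unitaries $A,B$ (telescoping) to bound the total error by $\tfrac r3\bigl(2\cdot5^{k-1}m\|H\|t/r\bigr)^{2k+1}$. Demanding that this be at most $\epsilon$ and solving for $r$ shows that it suffices to take $r=\bigl\lceil 2\cdot5^{k-1}m\|H\|t\,(m\|H\|t/\epsilon)^{1/2k}\bigr\rceil$, absorbing the residual $O(1)$ factors into the slack. Here the hypotheses $\epsilon\le1$ and $1\le2m5^{k-1}\|H\|t$ are precisely what force $r\ge1$ and guarantee that each slice $t/r$ meets the smallness condition required by the local bound.

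Finally, the number of exponentials is $r$ times the count per copy of $S_{2k}$, so $N_{\exp}\le r\cdot2m5^{k-1}$. Substituting the chosen $r$ gives $N_{\exp}\le4m^25^{2k-2}\|H\|t\,(m\|H\|t/\epsilon)^{1/2k}$, and since $4\cdot5^{-2}\le2$ --- with the remaining slack absorbing the constants swept aside in the local bound and the ceiling in the definition of $r$ --- this is at most $2m^25^{2k}\|H\|t\,(m\|H\|t/\epsilon)^{1/2k}$, as claimed. What remains is routine: pinning down the constants in the local estimate and checking the bookkeeping so that the final constant lands at $2$ rather than something larger.
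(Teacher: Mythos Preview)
The paper does not prove this lemma; it is imported directly as Theorem~1 of Berry et al.\ \cite{berry2007efficient} and used as a black box. Your sketch---Suzuki's order-$2k$ integrator, the local error bound $\|S_{2k}(\lambda)-e^{-iH\lambda}\|=O\bigl((m5^{k-1}\|H\|\lambda)^{2k+1}\bigr)$, slicing into $r$ pieces, and counting $2m5^{k-1}$ exponentials per slice---is exactly the argument given in that reference, so the proposal is correct and faithfully reproduces the original proof.
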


The total gate count of quantum circuit $\tilde{U}$ is upper bounded by $N_{\exp} \times \max_j C(H_j)$, where $C(H_j)$ is the cost required to implement exponential $e^{-iH_jt'}$.

In this paper our aim is to construction a quantum circuit implementing the CTQW $e^{-iHt}$ on a graph $G$ where $H=\gamma A$, $A$ is the adjacency matrix of $G$, and $\gamma$ is a constant.
Our technique is to decompose the graph $G$ into a union of star graphs $G_j$, and correspondingly, the Hamiltonian $H$ can be represented as the sum of some Hamiltonians $H_j$, where each $e^{-iH_jt}$ is a CTQW on a star graph $G_j$ which can be implemented efficiently. The idea  is illustrated in Fig.\ref{fig:dec_example}. The graph $G$ is decomposed into a union of star graphs $G=G_1\cup G_2\cup G_3$, and then the corresponding adjacency matrices satisfy $A=A_1+A_2+A_3$. Therefore, we obtain a decomposition of the Hamiltonian $H=\sum_{j=1}^3 H_j$, and each exponential $e^{-iH_jt}$ can be implemented efficiently. Then, by Lemma \ref{theorem:pf} we can construct a quantum circuit that implements $e^{-iHt}$. The main result of this work is shown in Theorem \ref{theorem:result}. The details will be demonstrated in the next section.

    \begin{figure}[htb]
    \centering
    \includegraphics[width=\linewidth]{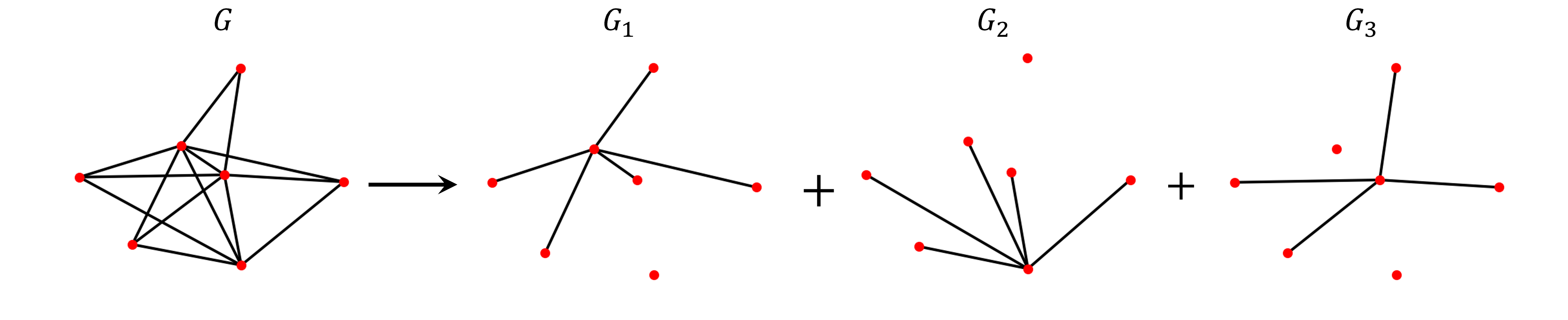}
    \caption{A simple example of the graph decomposition.}
    \label{fig:dec_example}
    \end{figure}

\begin{theorem}\label{theorem:result}
 Let $A$ be the adjacency matrix of a $d$-sparse graph $G$ with $N$ vertices. There exists a quantum circuit that implements the time-evolution operator $e^{-iHt}$ for CTQW on graph $G$ with error at most $\epsilon$ in spectral norm, using $O(d^{3+1/2k}5^{2k}(\|H\|t)^{1+1/2k} N \log N /\epsilon^{1/2k})$ elementary gates, where $H=\gamma A$, $\epsilon \le 1 \le 12d5^{k-1}\|H\|t$ and $k$ is an arbitrary positive integer.
    
\end{theorem}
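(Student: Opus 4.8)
The plan is to combine three ingredients: a graph-theoretic decomposition of $G$ into a small number of star forests, an efficient fast-forwarded circuit for a single star-graph CTQW, and the product-formula bound of Lemma~\ref{theorem:pf}. First I would establish the combinatorial fact that a $d$-sparse (hence maximum-degree-$d$) graph $G$ can be edge-decomposed into $m = 6d$ star forests $G_1, \dots, G_m$ (a star forest being a disjoint union of stars, i.e.\ a bipartite graph in which one side consists of isolated-in-the-forest centers). The natural route is to properly edge-color $G$ with at most $d+1$ colors by Vizing's theorem; each color class is a perfect matching fragment, and a matching is trivially a star forest, which already gives $m \le d+1$ --- but the paper evidently wants the centers/leaves structure to be explicit and controllable, so I expect the actual argument to orient edges and bound in/out-degrees, or to use the arboricity-type bound that a graph of maximum degree $d$ has star arboricity at most $\lceil 3d/2 \rceil \le 6d$ (with the constant $6$ chosen for a clean, self-contained proof rather than optimality). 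Correspondingly $A = \sum_{j=1}^{m} A_j$ where each $A_j$ is the adjacency matrix of the star forest $G_j$, and hence $H = \gamma A = \sum_{j=1}^m H_j$ with $H_j = \gamma A_j$ and $\|H_j\| \le \|H\|$ (since each $H_j$ is a "submatrix-supported" piece; more carefully, $\|A_j\| \le \|A\|$ because $A_j$'s edge set is a subset and its spectral norm is the square root of the largest star size, which is at most the largest degree in $G$).

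Next I would show that for each $j$, the exponential $e^{-iH_j t'}$ can be implemented up to error $\delta$ using $O(n \log n / \dots)$-type gate count, with the crucial feature of \emph{fast-forwarding}: because a CTQW on a single $k$-star is governed by a Hamiltonian of rank $2$ on the relevant subspace (the center couples symmetrically to its leaves, so on the $2$-dimensional invariant subspace spanned by the center and the uniform superposition of leaves it acts like $\sqrt{k}\,X$), its evolution can be realized with cost \emph{independent of $t'$}. For a star forest (disjoint stars), the pieces act on orthogonal subspaces and can be handled in parallel; the circuit needs to prepare, conditioned on which star a vertex belongs to, the appropriate $2$-dimensional rotation, which is where the $\log N = n$ qubit-addressing and a controlled rotation with angle depending on the star size enter. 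I would cite/adapt the star-graph CTQW implementation referenced in the excerpt (Qu et al., deterministic quantum walk on star graphs) to assert $C(H_j) = O(d \cdot n \log n)$ or similar --- the exact polynomial in $d$ here is what produces the $d^3$ versus $d^{3+1/2k}$ in the final bound, so I would track it carefully, noting that within a star forest a single vertex can be a leaf of many stars across different $G_j$ but within one $G_j$ the stars are vertex-disjoint, so the per-$j$ cost stays controlled.

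Then the assembly is mechanical: apply Lemma~\ref{theorem:pf} with $m = 6d$ and the given $\|H\|$, $t$, to get $N_{\exp} \le 2(6d)^2 5^{2k} \|H\| t (6d\|H\|t/\epsilon)^{1/2k} = O(d^{2+1/2k} 5^{2k} (\|H\|t)^{1+1/2k}/\epsilon^{1/2k})$, valid under the stated hypothesis $\epsilon \le 1 \le 12 d 5^{k-1}\|H\|t$ (which is exactly the $\epsilon \le 1 \le 2m5^{k-1}\|H\|t$ of the lemma with $m=6d$). Multiplying $N_{\exp}$ by $\max_j C(H_j) = O(d\, N\log N)$ (using $n = \log N$) gives total gate count $O(d^{3+1/2k} 5^{2k} (\|H\|t)^{1+1/2k} N \log N / \epsilon^{1/2k})$, which is the claimed bound. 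One subtlety to handle honestly in the write-up: each factor $e^{-iH_j t'}$ is itself only approximated to some error $\delta$, so I would set $\delta = \epsilon/(2 N_{\exp})$, absorb the resulting $\log(N_{\exp}) = O(\log(d\|H\|t/\epsilon))$ overhead into the $o(1)$/constant slack, and invoke the triangle inequality to combine the per-exponential errors with the product-formula truncation error.

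The main obstacle I anticipate is the second step --- nailing down a \emph{fast-forwarded}, gate-explicit circuit for $e^{-iH_j t'}$ on a star forest with the right dependence on $d$ and $\log N$, in particular handling the data-structure/oracle question of how the circuit "knows," given a vertex label, which star of $G_j$ it lies in and what that star's size is, so that the conditional $2$-level rotation can be synthesized; the graph-decomposition step is comparatively routine (it is essentially a bounded-star-arboricity statement), and the product-formula bookkeeping is purely a matter of plugging into Lemma~\ref{theorem:pf}.
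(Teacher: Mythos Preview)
Your three-step skeleton---decompose into $m=6d$ star forests, fast-forward each star-forest exponential, then plug $m=6d$ into Lemma~\ref{theorem:pf}---is exactly the paper's architecture, and your final arithmetic ($N_{\exp}=O(d^{2+1/2k}5^{2k}(\|H\|t)^{1+1/2k}/\epsilon^{1/2k})$ times $\max_j C(H_j)=O(dN\log N)$) matches the paper line for line.

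The two places where your route diverges from the paper are worth noting. For the decomposition, the paper does not use Vizing or a star-arboricity bound; instead it first orients edges so that each vertex has out-degree at most $d$ and partitions the edge set into $d$ rooted forests (Lemma~\ref{lemma:forest decomposition}), then applies a Cole--Vishkin-style $6$-coloring of each rooted forest (Lemma~\ref{lemma:coloring}) and groups edges by the color of the vertex they point to, yielding $6d$ star forests. Your Vizing idea is in fact simpler and gives $m\le d+1$ matchings (hence star forests with $|V'|=1$), which would shave a factor of $d$ off the final bound; the paper's choice seems driven by wanting a self-contained, parallelizable construction rather than optimality of constants.

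For the per-forest exponential, the obstacle you anticipate largely evaporates: since the stars in one $G_j$ are vertex-disjoint, the paper simply uses $e^{-iH_j t'}=\prod_l e^{-i\gamma S_l t'}$ \emph{exactly} and implements each star factor by an explicit diagonalization circuit (Theorem~\ref{theorem:ctqw on star}, cost $O(|V'|\log N)$ with $|V'|\le d$), then multiplies the $O(N)$ factors sequentially to get $C(H_j)=O(dN\log N)$. There is no need to condition on ``which star a vertex lies in'' at runtime, and because each star circuit is exact, your $\delta$-error bookkeeping is unnecessary here---the only error is the product-formula truncation already handled by Lemma~\ref{theorem:pf}.
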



\section{Quantum Walk Implementation}\label{sec:decom}

In this section we describe our algorithm for implementing CTQWs on sparse graphs, as shown in Algorithm \ref{alg:CTQW implementation}. The method depends on  efficient implementation of CTQWs on star graphs, which was studied in \cite{qu2022deterministic}, but the circuit constructed in \cite{qu2022deterministic} seems to be inappropriate for our case, because the star graph in \cite{qu2022deterministic}, represented by $S = \sum_{x=1}^{N} (|0\rangle \langle x| + |x\rangle \langle 0|)$ is connected, and the central vertex is encoded as $|0\rangle$ while other vertices are encoded as $\{|1\rangle, |2\rangle, \dots, |N\rangle \}$. In our situation, the star graph $S = \sum_{x\in V'} (|c\rangle \langle x| + |x\rangle \langle c|)$ is not connected (there exist some isolated vertices in $S$), and the center $|c\rangle$ can be encoded as any integer in $[N]$ while other vertices in $V'$ are a subset of the vertex set. Thus we construct a different circuit for CTQWs on star graphs as shown in Fig.\ref{fig:star ctqw}.

\begin{algorithm}

\caption{Implementation of CTQWs on sparse graphs}
\label{alg:CTQW implementation}
\begin{enumerate} [Step 1]
    \item Decompose a $d$-sparse graph $G$ into $d$ forests $\{F_i\}^d_{i=1}$ of rooted trees by Lemma \ref{lemma:forest decomposition}.
    \item Color the vertices of each forest $F_i$ by Lemma \ref{lemma:coloring} using $6$ colors, then divide \\ each edge set $E_i$ into $6$ subsets $E_i = \bigcup_{k=1}^6 E_{ik}$ according to the color of vertices \\that the directed edges point towards.
    \item Separately construct quantum circuit of CTQW on each graph $F_{ik}\coloneqq (V,E_{ik})$.
    \item Obtain the implementation of CTQW on graph $G$ by Lemma \ref{theorem:pf}.
\end{enumerate}

\end{algorithm}  

\paragraph{Step 1}
We introduce Lemma \ref{lemma:forest decomposition} here for decomposing any $d$-sparse graph $G$, which was implied in \cite{panconesi2001some}. For completeness, we give its proof in Appendix \ref{sec:proof of forest dec}.
\begin{lemma} [Forest decomposition] \label{lemma:forest decomposition}
    For any $d$-sparse graph $G(V,E)$, there exists a $d$-partition of the edge set $E=\bigcup_{i=1}^d E_i$, suc that each $F_i\coloneqq (V,E_i)$ is a forest. Furthermore, there exists a direction assignment of the edges such that all $F_i$ consist of rooted trees\footnote{A rooted tree is a directed tree with a unique root.}.
\end{lemma}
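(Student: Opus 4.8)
The plan is to build the edge partition and the edge orientation together: first fix an acyclic orientation of $G$, then colour the edges so that inside every colour class each vertex has in-degree at most one. Acyclicity together with this in-degree bound will force each colour class to be a forest whose components are rooted trees, which is exactly what the lemma asks for.

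In detail, I would start by fixing an arbitrary linear order $v_1,v_2,\dots,v_N$ of $V$ and orienting every edge from its lower-indexed endpoint to its higher-indexed one; this orientation is acyclic because every directed path strictly increases the vertex index, and the incoming edges of a vertex $v_j$ are precisely the edges joining it to neighbours of smaller index, so its in-degree is at most $\deg_G(v_j)\le d$. Then, at each vertex $v_j$ I would colour its (at most $d$) incoming edges with pairwise distinct colours from $\{1,\dots,d\}$; since every edge is an incoming edge of exactly one vertex — its higher-indexed endpoint — this colours every edge exactly once and produces the partition $E=\bigcup_{i=1}^{d}E_i$ (some $E_i$ may be empty, which is harmless). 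Finally, for a fixed colour $i$ the digraph $F_i=(V,E_i)$ has maximum in-degree $1$ and is acyclic: if its underlying graph contained a cycle $C$, then counting the in-edges lying on $C$ over the $|C|$ vertices of $C$ against the $|C|$ edges of $C$ would force every vertex of $C$ to have exactly one incoming $C$-edge, so $C$ would be a directed cycle, contradicting acyclicity. Hence the underlying graph of $F_i$ is a forest, and in each component $T$ — a tree with $|V(T)|-1$ edges in which every vertex has in-degree at most one — exactly one vertex has in-degree $0$, so $T$ with the inherited orientation is a rooted tree with that vertex as root. The single global orientation from the first step thus witnesses both claims of the lemma at once.

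I do not expect a real obstacle here; the proof is elementary and the only point requiring care is the order of the two constructions — orienting acyclically \emph{before} splitting into colour classes is what guarantees that every colour class is acyclic, and the in-degree-one condition then supplies the forest-of-rooted-trees structure automatically. A shorter but slightly less self-contained alternative is to repeatedly delete a maximal spanning forest of the current graph: each deletion lowers the degree of every non-isolated vertex by at least one, so after $d$ deletions no edges remain, yielding $d$ edge-disjoint spanning forests that one then roots arbitrarily to fix the directions.
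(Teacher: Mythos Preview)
Your proof is correct and is essentially the paper's own argument, only presented in a slightly cleaner order: the paper likewise fixes a vertex labelling, lets the higher-labelled endpoint of each edge pick its colour (which is exactly your ``colour the in-edges at each vertex with distinct colours''), and orients every edge from the smaller to the larger label. Your acyclicity-plus-in-degree-one counting argument for the forest/rooted-tree structure is the same idea as the paper's observation that any cycle would force its maximum-labelled vertex to have proposed the same colour for two distinct incident edges.
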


\paragraph{Step 2} 
We call a parallel graph coloring algorithm to produce a valid $6$-coloring\footnote{A valid vertex coloring is an assignment of colors to each vertex such that each vertex has a different color from the vertices adjacent to it.} of a rooted tree in $O(\log^* N)\footnote{If $N > 1$, $\log^* N = 1 + \log^* \log N$, else $\log^* N=0$.}$ time in Lemma \ref{lemma:coloring}. In fact, any tree can be $2$-colored according to the depth of each vertex is odd or even, but it is hard to achieve this in parallel, which results in more time cost.

\begin{lemma} \label{lemma:coloring}
    There exists a valid vertex coloring with $6$ colors for any rooted tree.
\end{lemma}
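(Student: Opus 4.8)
The plan is to prove this constructively via the classical \emph{deterministic coin tossing} reduction of Cole--Vishkin (also attributed to Goldberg--Plotkin--Shannon), which has the added benefit of producing the colouring in the $O(\log^* N)$ parallel rounds promised in the surrounding discussion. Mere existence is of course trivial, since a rooted tree is bipartite and hence $2$-colourable by the parity of vertex depth; but that colouring is not computable by a constant number of local rounds, which is why one reduces a large palette instead. It suffices to treat a single rooted tree, as the trees of a forest can be coloured independently.

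First I would fix the trivial proper colouring $c_0$ in which each vertex carries its own index in $\{0,1,\dots,N-1\}$; distinctness of indices makes it proper. Then I define a palette-shrinking step taking any proper colouring $c$ with values in $\{0,\dots,C-1\}$ to a new colouring $c'$: for a non-root vertex $v$ with parent $p(v)$, let $i_v$ be the least bit position at which the binary expansions of $c(v)$ and $c(p(v))$ differ (well defined since $c(v)\neq c(p(v))$), let $b_v$ be the $i_v$-th bit of $c(v)$, and set $c'(v)=2 i_v+b_v$; for the root $r$ set $i_r=0$, let $b_r$ be the $0$-th bit of $c(r)$, and set $c'(r)=b_r$.

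The two facts to verify are that $c'$ remains proper and uses at most $2\lceil\log_2 C\rceil$ colours. The palette bound is immediate from $i_v\le\lceil\log_2 C\rceil-1$ and $b_v\in\{0,1\}$. Properness reduces to parent--child pairs (the only edges of a tree): if $w$ is a child of $v$ with $c'(w)=c'(v)$ then $i_w=i_v$ and $b_w=b_v$, yet by definition $i_w$ is a bit position at which $c(w)$ and $c(p(w))=c(v)$ differ, so the $i_w$-th bits of $c(w)$ and $c(v)$ disagree, contradicting $b_w=b_v$; the same computation rules out $c'(w)=c'(r)$ for a child $w$ of the root. Iterating the step gives $C_0=N$ and $C_{t+1}\le 2\lceil\log_2 C_t\rceil$; since $2\lceil\log_2 C\rceil<C$ for every $C\ge 7$ while $2\lceil\log_2 6\rceil=6$, the sequence strictly decreases until it first reaches a value at most $6$, and by the definition of $\log^*$ this happens after $O(\log^* N)$ steps. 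Each step is purely local — a vertex needs only its own colour and its parent's — hence one parallel round, which also yields the claimed running time; the colouring at the final step is the desired $6$-colouring.

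I expect the only delicate points to be the bookkeeping at the root (which has no parent, handled by the fixed choice $i_r=0$ above) and the two elementary inequalities $2\lceil\log_2 C\rceil<C$ for $C\ge 7$ and $2\lceil\log_2 6\rceil=6$ that pin the fixed point of the reduction at $6$ rather than at a larger constant; the remaining verifications are the standard Cole--Vishkin argument.
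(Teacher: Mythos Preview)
Your proof is correct and is essentially the same argument as the paper's: both run the Goldberg--Plotkin--Shannon / Cole--Vishkin deterministic coin-tossing reduction starting from the trivial vertex-label colouring, recolour each vertex by the pair (first differing bit index from the parent, own bit at that index), verify properness on parent--child pairs, and iterate until the palette stabilises at~$6$. The only cosmetic differences are that you track the palette size $C_t$ with the recursion $C_{t+1}\le 2\lceil\log_2 C_t\rceil$ whereas the paper tracks the bit-length $L_j$ with $L_{j+1}=\lceil\log L_j\rceil+1$, and you write the new colour as $2i_v+b_v$ where the paper writes it as the concatenation $\langle k,C_v[k]\rangle$; these are the same reduction in different notation.
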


\begin{proof}

The vertex coloring is based on an extension of "deterministic coin-flipping" technique\cite{goldberg1987parallel}. Denote $C_v$ as the color of vertex $v$ and $p_v$ be the parent of $v$. Starting from a trivial valid vertex coloring using the label of vertices (i.e., $C_v=v$), then we iteratively reduce the number of colors by recoloring each vertex $v$. In detail, denote $k$ as the index of the first bit such that $C_v[k] \ne C_{p_v}[k]$. Then we recolor vertex $v$ with the concatenation of $k$ and $C_v[k]$, denoted as $\left \langle k,C_v[k]\right \rangle$. If the vertex $v$ is a root, take $k=0$ and $C_v[0]$. 

We can prove that coloring computed by this procedure is valid. Recall that $p_v$ is the parent of vertex $v$ and let $u$ be one of the children of $v$. Denote $i$ as the index of the first bit that $C_{p_v}[i] \ne C_v[i]$ and $j$ as the index of the first bit of $C_v[j] \ne C_u[j]$. Then the color of $v$ is updated to $\left \langle i,C_v[i]\right \rangle$ and the color of $u$ is $\left \langle j,C_u[j]\right \rangle$. Assuming that $i\ne j$, the new color of $v$ and $u$ are obviously different. Otherwise, if $i=j$, $C_v[j]$ is not equal to $C_u[j]$ according to the definition of $j$, so the new color of $v$ and $u$ are different as well. Because each vertex is colored differently from its children, the validity is preserved. 

The number of colors will be reduced to $6$ after $O(\log^* N)$ iterations. Denote $L_j$ as the number of bits used to represent the color after $j$ iterations. Initially, $L_0=\lceil \log N \rceil$, and we can derive that $L_{j+1} = \lceil \log L_j \rceil +1$. After $j=O(\log^* N)$ iterations, $L_j$ decrease to $3$\cite{goldberg1987parallel}, and this number can not be further decreased, for $L_{j+1} = L_j = 3$. Hence the number of colors is at most $6$, by $3$ possible values for $k$ and $2$ possible values for $C_v[k]$. 
    
\end{proof}

After coloring the vertices in $F_i$, we divide the edge set $E_i= \bigcup_{k=1}^6 E_{ik}$ according to the color of vertices that edges point to. Note that each connected component in graph $F_{ik}\coloneqq (V,E_{ik})$ is a star.

\paragraph{Step 3}

We have shown that each graph $F_{ik}$ is actually a disjoint union of star graphs, and the corresponding Hamiltonian $H_{ik} = \gamma \sum_{l=1}^{L_{ik}} S_{ikl}$, where each $S_{ikl}$ is the adjacency matrix of a star graph in $F_{ik}$, and $L_{ik}$ is the number of stars. Due to the commutativity between all $S_{ikl}$ in $H_{ik}$, we have
\begin{equation} \label{eq:commutativity}
    e^{-iH_{ik}t}=\prod_{l=1}^{L_{ik}} e^{-i\gamma S_{ikl} t}.
\end{equation}
Suppose $S_{ikl} = \sum_{x\in V'}(|c\rangle \langle x|+ |x\rangle \langle c|)$, where $c$ is the central vertex and $V'$ is the vertex set of $S_{ikl}$ except for the center. The non-zero eigenvalues of this matrix is $\lambda_{\pm} = \pm \sqrt{|V'|}$, where $|V'|$ denote the cardinal number of $V'$, and the eigenvectors are
\begin{equation} \label{eq:star eigvec}
    |v_{\pm} \rangle = \frac{1}{\sqrt{2}} \left(|c\rangle \pm \frac{1}{\sqrt{|V'|}} \sum_{x\in V'} |x\rangle \right).
\end{equation}

In order to implement $e^{-i\gamma S_{ikl} t}$ as shown by Theorem \ref{theorem:ctqw on star} below, we first introduce Lemma \ref{lemma:sparse qsp}.
\begin{lemma} [Sparse quantum state preparation\cite{gleinig2021efficient}] \label{lemma:sparse qsp}
    Let $V\subset \{0,1\}^n$ be a subset of basis states, there exists a quantum circuit $C$ that maps the initial state $|0^n\rangle$ to the quantum state $$|\phi\rangle = \sum_{x \in V} c_x |x\rangle$$
    using $O(|V|n)$ elementary gates, where $c_x$ is the non-zero amplitude.
\end{lemma}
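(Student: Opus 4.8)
The plan is to construct the \emph{inverse} circuit $C^\dagger$ that disentangles the target state down to a single computational basis state, and then take the adjoint. Concretely, I would maintain a state whose support is a set of basis strings $W\subseteq\{0,1\}^n$, starting from $W=V$ with the given amplitudes $c_x$, and repeatedly apply an $O(n)$-gate ``merge'' step that decreases $|W|$ by exactly one while leaving the remaining amplitudes (up to relabeling of $W$) intact. After $|V|-1$ merges the state is supported on a single string $|y\rangle$, which at most $n$ further $X$ gates send to $|0^n\rangle$; reversing the whole sequence and replacing each gate by its inverse then yields the desired $C$ mapping $|0^n\rangle\mapsto|\phi\rangle$ with the same asymptotic gate count.

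The merge step is the core. First I would pick any two strings $x_1,x_2\in W$ and fix a position $q$ at which they disagree; then, for every other position $j\neq q$ at which they differ, apply a CNOT with control $q$ and target $j$. A direct check shows this makes $x_1$ and $x_2$ agree on every coordinate except $q$, and since each CNOT is a permutation of the computational basis, the support size and all amplitudes are preserved (only the labels in $W$ are relabeled), so no collisions occur. This costs at most $n-1$ CNOTs. After it, the two images read $(w,0)$ and $(w,1)$ on qubit $q$, where $w\in\{0,1\}^{n-1}$ is their common content on the other wires; crucially, no \emph{other} string in $W$ can carry this same content $w$ on the non-$q$ wires, for otherwise it would coincide with one of the two images. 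Hence a single-qubit unitary $R$ on qubit $q$, controlled on the remaining $n-1$ wires matching the pattern $w$, acts nontrivially only on the two target amplitudes: choosing $R$ to rotate the normalized vector $(c_{x_1},c_{x_2})$ onto $(c,0)$ zeroes one amplitude and reduces $|W|$ by one, leaving every other amplitude untouched.

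For the base case $|V|=1$ the state is a single $|y\rangle$ (up to a global phase), preparable from $|0^n\rangle$ by at most $n$ $X$ gates plus a phase, i.e.\ $O(n)$ gates. The total gate count is then $|V|$ merges times the per-merge cost plus $O(n)$, so the whole claim reduces to showing each merge costs $O(n)$ elementary gates. The CNOT part is clearly $O(n)$; the main obstacle is the isolating multi-controlled rotation, which has up to $n-1$ controls and would cost $\Theta(n^2)$ if decomposed naively without workspace. I would resolve this with the standard decomposition of a multi-controlled single-qubit gate into $O(n)$ elementary gates using a free or borrowed (dirty) ancilla (Barenco-type V-chains): whenever the isolating control set can be chosen of size at most $n-2$, one register wire already serves as the required workspace, and otherwise adjoining a single reusable ancilla suffices. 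Keeping control sets small by a careful choice of the pair $x_1,x_2$ and of the target qubit $q$ is exactly the optimization carried out in~\cite{gleinig2021efficient}, and verifying that the chosen control pattern isolates precisely the two merged strings is the delicate bookkeeping point. Finally, reversing the disentangling circuit and conjugating each gate gives $C$; since inversion preserves gate counts, the total is $O(|V|n)$, completing the proof.
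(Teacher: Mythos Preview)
The paper does not prove this lemma at all: it is stated as a black-box result imported from~\cite{gleinig2021efficient} and used immediately in the proof of Theorem~\ref{theorem:ctqw on star}. So there is no ``paper's own proof'' to compare against here.

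That said, what you have sketched is precisely the construction of~\cite{gleinig2021efficient}: build $C^\dagger$ by repeatedly merging two support strings via (i) a CNOT cascade from a distinguishing qubit $q$ to align them on every other coordinate, then (ii) a multi-controlled single-qubit rotation on $q$ to zero one amplitude. Your key observations---that the CNOT permutation preserves support size and amplitudes, and that after alignment no third support string can share the pattern $w$ on the non-$q$ wires---are correct and are exactly the points that make the isolating controlled rotation well-defined. Your identification of the multi-controlled gate as the bottleneck and the appeal to an $O(n)$ decomposition with a borrowed ancilla is also the standard resolution; the refinement in~\cite{gleinig2021efficient} of choosing the pair and the control set so that the number of controls stays small (often $O(\log|W|)$, since only enough controls to isolate the merged pair within the current support are needed) is an optimization but is not required for the $O(|V|n)$ bound you are after. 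Your sketch is sound and faithful to the cited source.
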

Note that recently more efficient approaches have been developed for sparse quantum state preparation in\cite{luo2024circuitcomplexity}, but Lemma \ref{lemma:sparse qsp} is sufficient for our purpose.

\begin{theorem} \label{theorem:ctqw on star}
    Suppose $S = \sum_{x\in V'} (|c\rangle \langle x| + |x\rangle \langle c|)$ is the adjacency matrix of a graph with $N$ vertices, where $V'$ is a subset of the vertex set. There exists a quantum circuit that implements $e^{-i\gamma St}$ with $O(|V'|\log N)$ gate complexity and one ancilla qubit.
\end{theorem}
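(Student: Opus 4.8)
\emph{Plan.} The idea is to exploit that $S$ lives on a two–dimensional subspace. Put $|\mu\rangle := \frac{1}{\sqrt{|V'|}}\sum_{x\in V'}|x\rangle$. Since the center of a star is not one of its leaves, $c\notin V'$, so $\langle c|\mu\rangle = 0$ and $\{|c\rangle,|\mu\rangle\}$ is orthonormal; moreover a one–line computation gives $S|c\rangle = \sqrt{|V'|}\,|\mu\rangle$ and $S|\mu\rangle = \sqrt{|V'|}\,|c\rangle$, while $S$ kills everything orthogonal to both (this is just (\ref{eq:star eigvec}) rewritten, so that $S$ restricted to $\mathrm{span}\{|c\rangle,|\mu\rangle\}$ equals $\sqrt{|V'|}\,X$). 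Hence, writing $\theta := \gamma\sqrt{|V'|}\,t$,
\begin{equation}
e^{-i\gamma S t} \;=\; \cos\theta\,\big(|c\rangle\langle c| + |\mu\rangle\langle\mu|\big) \;-\; i\sin\theta\,\big(|c\rangle\langle\mu| + |\mu\rangle\langle c|\big) \;+\; \big(I - |c\rangle\langle c| - |\mu\rangle\langle\mu|\big),
\end{equation}
i.e. $e^{-i\gamma S t}$ is a single Givens rotation by $\theta$ in the $|c\rangle$--$|\mu\rangle$ plane and the identity elsewhere. So the whole problem reduces to implementing this one rotation (with $\theta$ a classically precomputed number) using $O(|V'|\log N)$ gates and a single ancilla.

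For the circuit I would first relabel the computational basis so that $c = 0^n$ (an $O(n)$-gate bit mask, undone at the very end). By Lemma \ref{lemma:sparse qsp} there is a circuit $C$ with $C|0^n\rangle = |\mu\rangle$ using $O(|V'|\,n) = O(|V'|\log N)$ gates, and note $\langle 0^n|C^\dagger|0^n\rangle = \overline{\langle 0^n|\mu\rangle} = 0$ because $0^n\notin V'$. With one ancilla $a$ (initially $|0\rangle$) I would form the ``parking'' unitary
\begin{equation}
W \;:=\; \Lambda_{a=0}(C)\;\cdot\;\Lambda_{\mathrm{reg}=0^n}(X_a)\;\cdot\;C^\dagger,
\end{equation}
where $\Lambda_{a=0}(C)$ applies $C$ to the register when $a=0$, and $\Lambda_{\mathrm{reg}=0^n}(X_a)$ is the multi-controlled NOT that flips $a$ when the register reads $0^n$. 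A short check, using $\langle 0^n|C^\dagger|0^n\rangle = 0$ and $C^\dagger|\psi\rangle\perp C^\dagger|\mu\rangle = |0^n\rangle$, shows $W|0\rangle_a|c\rangle = |0\rangle_a|c\rangle$, $W|0\rangle_a|\mu\rangle = |1\rangle_a|c\rangle$, and $W|0\rangle_a|\psi\rangle = |0\rangle_a|\psi\rangle$ for every $|\psi\rangle$ orthogonal to both $|c\rangle$ and $|\mu\rangle$. The output circuit is $\tilde U := W^\dagger\,\Lambda\,W$, where $\Lambda$ applies $e^{-i\theta X}$ to the ancilla conditioned on the register being $|c\rangle = |0^n\rangle$ --- a multi-controlled single-qubit rotation, $O(\log N)$ gates. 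Pushing the three classes of inputs through $W$, $\Lambda$, $W^\dagger$ yields $\tilde U\,\big(|0\rangle_a|\phi\rangle\big) = |0\rangle_a\otimes e^{-i\gamma S t}|\phi\rangle$ for all $|\phi\rangle$, so the ancilla disentangles and returns to $|0\rangle$. The gate count is dominated by $C$, $C^\dagger$, and the $a$-controlled copy of $C$ inside $W$, namely $O(|V'|\log N)$; the mask and the two multi-controlled gates add only $O(\log N)$.

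The step I expect to be the crux is not the counting but verifying that $W$ (hence $\tilde U$) is the \emph{exact} identity on the whole continuum of states orthogonal to $|c\rangle$ and $|\mu\rangle$, not merely on basis states outside $\{c\}\cup V'$: this is precisely where $c\notin V'$ (equivalently $\langle c|\mu\rangle=0$) is used, through the chain ``$C^\dagger|\psi\rangle\perp|0^n\rangle$ $\Rightarrow$ the middle multi-controlled $X_a$ is inert on that branch $\Rightarrow$ $C$ uncomputes $C^\dagger$ cleanly,'' which is also what guarantees the ancilla comes back to $|0\rangle$. A minor secondary point is the exact resource accounting of the two multi-controlled gates acting on a single ancilla; standard constructions keep these at $O(\log N)$, which does not affect the stated bound.
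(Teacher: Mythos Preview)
Your proposal is correct and follows essentially the same approach as the paper: reduce to the two-dimensional invariant subspace $\mathrm{span}\{|c\rangle,|\mu\rangle\}$, handle $|c\rangle$ by a bit mask and $|\mu\rangle$ by sparse state preparation (Lemma~\ref{lemma:sparse qsp}), and conjugate a single multi-controlled rotation using one ancilla. The only cosmetic difference is that the paper encodes the subspace via the eigenbasis $|v_\pm\rangle$ into the least significant register qubit (Fig.~\ref{fig:star qsp}) rather than parking $\{|c\rangle,|\mu\rangle\}$ in the ancilla as you do; your explicit check that the circuit acts as the identity on the full orthogonal complement (using $c\notin V'$) is a point the paper leaves implicit.
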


\begin{proof}
    Assume we have an $n$-qubit working register and one ancilla qubit. Let quantum circuit $C_{1} \in \{I,X\}^{\otimes n}$ performs $|0\rangle \mapsto |c\rangle,$
    and circuit $C_2$ prepares the quantum state $|\phi\rangle = \frac{1}{\sqrt{|V'|}} \sum_{x\in V'} |x\rangle$ by Lemma \ref{lemma:sparse qsp}.
    Then the circuit in Fig. \ref{fig:star qsp} performs as follows:
    \begin{equation} \label{eq:star map}
        |0\rangle |0\rangle_a \mapsto |v_+\rangle |0\rangle_a,\ |1\rangle |0\rangle_a \mapsto |v_-\rangle |0\rangle_a.
    \end{equation}

    We first apply a Hadamard gate and a swap gate to produce  $$\frac{1}{\sqrt{2}}(|0\rangle |0\rangle_a \pm |0\rangle |1\rangle_a),$$
    where $\pm$ corresponds to $|0\rangle |0\rangle_a$ and $|1\rangle|0\rangle_a$ respectively, and then we perform $C_1$ or $C_2$ controlled by the value of the ancilla qubit to get $$\frac{1}{\sqrt{2}} (|c\rangle |0\rangle_a \pm |\phi \rangle |1\rangle_a).$$
    Next we apply a Pauli $X$ gate on the ancilla qubit if the value of the working register equals to $c$, and the state is $$\frac{1}{\sqrt{2}}(|c\rangle \pm |\phi\rangle )|1\rangle_a.$$
    Finally we apply a Pauli $X$ gate on the ancilla qubit to recover it, then we obtain the map as (\ref{eq:star map}).

    \begin{figure}[htb]
    \centerline{
    \Qcircuit @C=1em @R=.7em {
     &\qw   & \qw    &\qw &  \multigate{4}{C_1} & \qw &  \multigate{4}{C_2} & \qw & \gate{} & \qw  & \qw \\
     & \qw   & \qw    &\qw &  \ghost{C_1} & \qw     & \ghost{C_2} & \qw&\gate{} \qwx &\qw &\qw \\
     & \vdots &   &   &       &  &  &  & \qwx & & \\
     & \qw   & \qw    & \qw & \ghost{C_1} & \qw     & \ghost{C_2} & \qw & \gate{}  &\qw & \qw \\
     & \gate{H}  & \qswap & \qw & \ghost{C_1} & \qw     &  \ghost{C_2} & \qw & 
     \gate{} \qwx  &\qw  &\qw \\
     & \qw   & \qswap \qwx & \qw  & \ctrlo{-1}& \qw   & \ctrl{-1}   & \qw &\gate{X} \qwx  & \gate{X} & \qw &  \\
    }
    }
    \caption{Quantum circuit for transforming from eigenbasis of $S$ to the computational basis with single ancilla qubit.}
    \label{fig:star qsp}
    \end{figure}
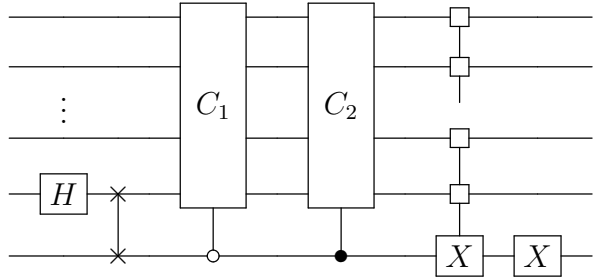

    There are only two non-zero eigenvalues, thus the diagonal operator can be easily implemented by applying $R_z(2\gamma t\sqrt{|V'|})$ to the least significant qubit in the working register, controlled by all other qubits in $|0\rangle$ state. We know that the $H$ and $R_z$ gates can be combined as $R_x$ gate. Hence the quantum circuit for CTQW on $S$ is as Fig. \ref{fig:star ctqw}. By Lemma \ref{lemma:sparse qsp}, the size of circuit $C_2$ is $O(|V'|\log N)$, and other gates in Fig. \ref{fig:star ctqw} cost $O(\log N)$ elementary gates, including the multi-controlled rotation gate\cite{nie2024quantum}. Thus the gate complexity of this circuit is $O(|V'|\log N)$.

    \begin{figure}[htb]
    \centerline{
    \Qcircuit @C=1em @R=.7em {
        & \qw       & \gate{}       & \multigate{4}{C_2^\dagger}    & \multigate{4}{C_1^\dagger}    & \qw           & \ctrlo{1}             & \qw           & \multigate{4}{C_1}    & \multigate{4}{C_2}    & \gate{} & \qw & \qw \\
        & \qw       & \gate{} \qwx  & \ghost{C_2^\dagger}           & \ghost{C_1^\dagger}           & \qw           & \ctrlo{1}             & \qw           & \ghost{C_1}           & \ghost{C_2}           & \gate{} \qwx & \qw & \qw \\
        & \vdots    & \qwx          &                               & 
                                      &               &                     & & & & \qwx  & & \\ 
        & \qw       & \gate{}   & \ghost{C_2^\dagger}           & \ghost{C_1^\dagger}           & \qw           & \ctrlo{1}             & \qw           & \ghost{C_1}           & \ghost{C_2}           & \gate{} 
 & \qw & \qw \\
        & \qw       & \gate{} \qwx  & \ghost{C_2^\dagger}           & \ghost{C_1^\dagger}           & \qswap        & \gate{R_x(\alpha)}    & \qswap        & \ghost{C_1}           & \ghost{C_2}           & \gate{} \qwx & \qw & \qw \\
        & \gate{X}  & \gate{X} \qwx & \ctrl{-1}                     & \ctrlo{-1}                    &\qswap \qwx    & \qw                   & \qswap \qwx   & \ctrlo{-1}            & \ctrl{-1}             & \gate{X} \qwx & \gate{X} & \qw \\
    }
    }
    \caption{Efficient quantum circuit for CTQW $e^{-i\gamma St}$ over $S = \sum_{x\in V'} (|c\rangle \langle x| + |x\rangle \langle c|)$ with single ancilla qubit,  where $\alpha = 2\gamma\sqrt{|V'|}t$.}
    \label{fig:star ctqw}
    \end{figure}
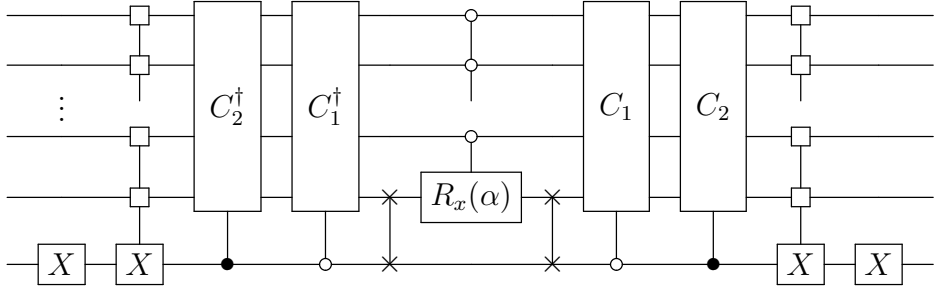

\end{proof}

By Theorem \ref{theorem:ctqw on star}, the gate cost of implementing each $e^{-i\gamma S_{ikl}t}$ is $O(d \log N)$ with $|V'| = O(d)$. Therefore, the gate complexity of implementing $e^{-iH_{ik}t}$ is $O(N d \log N)$ according to (\ref{eq:commutativity}), for $L_{ik}=O(N)$.

\paragraph{Step 4}

By Lemma \ref{lemma:forest decomposition} and Lemma \ref{lemma:coloring}, we obtain $H=\sum_{i=1}^d \sum_{k=1}^6 H_{ik} = \sum_{j=1}^{6d} H_j$, where each $e^{-iH_jt}$ can be implemented with $O(Nd \log N)$ cost by Step 3. And by Lemma \ref{theorem:pf} we can learn the upper bound of the required number of exponentials $N_{\exp}$. Thus, the gate cost required to implement $e^{-iHt}$ within error $\epsilon \le 1 \le 12d5^{k-1}\|H\|t$ is as follows:
\begin{equation}
\begin{aligned}
    C &= N_{\exp} \times \max_j{C(H_j)} \\
    &\le 2m^25^{2k} \|H\|t(m\|H\|t/\epsilon)^{1/2k} \times O(N d\log N) \\
    &= O(d^{3+1/2k}5^{2k}(\|H\|t)^{1+1/2k} N \log N /\epsilon^{1/2k}),
\end{aligned}
\end{equation}
where we have used the fact that $m=6d$. Therefore we achieve the result presented in Theorem \ref{theorem:result}, i.e., we can implement CTQW on a $d$-sparse graph by a quantum circuit with $ (d^3 \|H\| t N \log N)^{1+o(1)}$ elementary gates.

Generally, in order to implement $e^{-iHt}$, we will decompose $H$ as the linear combination of Pauli strings, i.e., $H=\sum_{j=1}^m h_j P_j$, where each $P_j \in \{I,X,Y,Z\}^{\otimes n}$ and $m=O(4^n)$. The gate complexity of implementing exponential $e^{-ih_j P_j t}$ is $O(n)$. Then by Lemma \ref{theorem:pf}, the gate cost for implementing $e^{-iHt}$ is $(\|H\| t N^4 \log N)^{1+o(1)}$. Thus our method is much better in the sparse case, e.g., $d=O(1)$. Note that for dense graphs (i.e., $d=O(N)$), the complexity is the same as the Pauli decomposition method.


\section{Discussion}

As shown in Section \ref{sec:decom}, we can implement CTQW on a sparse graph by decomposing the graph into forests consisting of stars, on which CTQWs can be efficiently implemented. Then according to Lemma \ref{theorem:pf} we can combine the Hamiltonians corresponding to these decomposed graphs to approximate the CTQW on the original graph. There may exist other graph decomposition techniques that can be used to accomplish this task. For instance, we can decompose the graph as an union of stars by finding a vertex cover of the graph. In this paper we use star graphs as the constituent graphs for our construction. There are a few other classes of graphs on which CTQWs can be implemented with efficient quantum circuits, such as complete graph, complete bipartite graph\cite{loke2017efficient1}, circulant graph with efficiently characterized eigenvalue spectrum\cite{qiang2016efficient} and the Cartesian product of these graphs\cite{loke2017efficient1}. Perhaps we can choose some of them as the constituent graphs for circuit construction. In addition, the method in this paper doesn't work well for CTQWs over dense graphs, and we wonder if there are other methods that can perform better than Pauli decomposition on general graphs. We are also curious about the efficient implementation of CTQWs with other choice of Hamiltonian like $H=\gamma(D-A)$ and $H=|\omega \rangle \langle \omega| - A$, as mentioned in Section \ref{sec:problem}.


\bibliographystyle{unsrt}
\bibliography{reference}

\appendix
\section{Proof of Lemma \ref{lemma:forest decomposition}} \label{sec:proof of forest dec}

\begin{proof}
Let $d_v$ denotes the degree of vertex $v$. Each vertex $v$ proposes a color for each edge incident on it by ranking these edges arbitrarily, then each edge receives a number between 1 and $d_v$. Therefore, each edge in the graph gets two proposals from its two endpoints. We choose the proposal of the vertex with higher label to be the color of the edge (i.e., if $u<v$ then edge $uv$ is colored with $v$'s proposal). For $i=1,...,d$, let $E_i$ denotes the edge set with color $i$. Then $F_i\coloneqq (V,E_i)$ is a forest subgraph of $G$ because there is not cycle inside any $F_i$. Note that any cycle in $F_i$ must contain a vertex $v$ has greater label than its two neighbors and the color of edges incident on $v$ in this cycle are both decided by $v$, which contradicts the fact that vertices propose different colors for different edges. 

Now we assign directions to the edges in forests, which orient from the endnode of smaller label to the one of higher label, then all forests consist of rooted trees. Observe that there is only one parent for each vertex in these forests, otherwise the vertex with more than one parent must propose the same color for different edges, which is not possible according to the procedure of coloring. Then there is a unique root for each directed tree in these forests, meaning that each tree is a rooted tree .
\end{proof}

\end{document}